\newtheorem{theorem}{Theorem}
\newtheorem{lemma}{Lemma}
\newtheorem{definition}{Definition}
\newtheorem{conjecture}{Conjecture}
\newcommand\footnoteref[1]{\protected@xdef\@thefnmark{\ref{#1}}\@footnotemark}
\DeclareMathOperator{\supp}{supp}
\newcommand{\Tr}{{\rm Tr}}
\newcommand{\I}{ {(i)} }
\renewcommand\th{ {\rm th} }
\newcommand*{\ket}[1]{\lvert #1 \rangle}
\newcommand*{\ketbra}[2]{\lvert #1 \rangle\!\langle #2 \rvert}
\def\soc{{\rm soc}}
\newcommand{\class}{{\rm cl}}
\newcommand{\Dens}{\mathcal{D}}
\newcommand{\BH}{ \mathcal{B} }
\newcommand{\Hil}{ \mathcal{H} }
\newcommand{\C}{ \mathcal{C} }  
\newcommand{\E}{ \mathcal{E} }  
\newcommand{\EQMR}{\mathcal{E}_\QMR}
\newcommand{\EQMRTwo}{\mathcal{E}_\QMR^{(2)}}
\newcommand{\EQMRTh}{\mathcal{E}_\QMR^{(3)}} 
\newcommand{\QMR}{{\rm QMR}}
\begin{document}

\title{Quantum voting and violation of Arrow's Impossibility Theorem}

%
%
\author{Ning~Bao\footnote{E-mail: ningbao@its.caltech.edu}}
\affiliation{Institute for Quantum Information and Matter, California Institute of Technology, Pasadena, CA 91125, USA}
\affiliation{Walter Burke Institute for Theoretical Physics, California Institute of Technology, Pasadena, CA 91125}

\author{Nicole~Yunger~Halpern\footnote{E-mail: nicoleyh@caltech.edu}}
\affiliation{Institute for Quantum Information and Matter, California Institute of Technology, Pasadena, CA 91125, USA}

\nopagebreak

\date{\today}

\pacs{03.67.Ac}   

\begin{abstract}
We propose a quantum voting system, in the spirit of quantum games 
such as the quantum Prisoner's Dilemma. 
Our scheme enables a constitution
to violate a quantum analog of Arrow's Impossibility Theorem.
Arrow's Theorem is a claim proved deductively in economics: 
Every (classical) constitution 
endowed with three innocuous-seeming properties is a dictatorship.
We construct quantum analogs of
constitutions, of the properties, and of Arrow's Theorem.
A quantum version of majority rule, we show,
violates this Quantum Arrow Conjecture.
Our voting system allows for tactical-voting strategies
reliant on entanglement, interference, and superpositions.
This contribution to quantum game theory helps elucidate how 
quantum phenomena can be harnessed for strategic advantage.
\end{abstract}
\maketitle{}

Today's voting systems are classical.
Societies hold elections to determine
which of several candidates will win an office.
Each voter ranks the candidates, forming a \emph{preference}.
Voters' preferences are combined deterministically
according to some rule set, or constitution.
What if citizens could entangle, superpose, and interfere preferences?
We formulate a quantum voting system,
in the tradition of quantum games,
that highlights the power of quantum resources.

Quantum game theory has flourished 
over the past several years~\cite{FlitneyA02,Brunner_13_Connection}.
In a classical game, players can perform
only local operations and classical communications.
Each player can prepare and measure only systems in his/her lab.
Players can communicate only via classical channels
(e.g., by telephone), if at all.

Examples include the Prisoner's Dilemma.
Suppose that the police arrest two suspected criminals.
The suspects are isolated in separate cells.
If neither suspect confesses, 
each will receive a lenient sentence
(e.g., one year in jail).
If both suspects confess, both will receive moderate sentences
(e.g., two years).
If just one suspect confesses, s/he will receive no sentence.
The other suspect will suffer a heavy penalty (e.g., three years).
Unable to communicate with the other prisoner, 
each suspect can optimize his/her future by confessing.
Both suspects would benefit more
if they could agree to remain silent.
The Prisoner's Dilemma consists of
the tension between 
(i) the optimal strategy attainable
and (ii) the optimal strategy that the prisoners could attain
if they could communicate.

Quantizing the game diminishes the tension~\cite{EisertWL99}.
Eisert \emph{et al.} associate each prisoner with a Hilbert space.
They translate each prisoner's options 
(to cooperate with the police and to defect) 
into basis elements ($\ket{C}$ and $\ket{D}$).
The game becomes a quantum circuit.
Measuring the prisoners' joint state
determines their penalties.
This quantization alters the landscape of 
possible outcomes and strategies.

Similar insights result from quantizing
the penny-flipping game~\cite{Meyer99}, the Monty Hall problem~\cite{Flitney_02_Monty,Dariano_02_Monty},
and Conrad's Game of Life~\cite{Bleh_12_QGoL,Arrighi_10_QGoL}.
A game elucidates 
the canonical demonstration of entanglement's power:
Clauser, Holt, Shimony, and Hauser (CHSH) reformulated Bell's Theorem
in terms of a protocol cast as 
``the CHSH game''~\cite{CHSH_69_Proposed,Wilde_12_Quantum}.

Elections have been cast in game-theoretic terms~\cite{Ordeshook_03_Game}.
Elections therefore merit generalization with quantum theory.
Upshots of quantization, we show,
include a violation of a quantum analog of Arrow's Impossibility Theorem,
as well as quantum voting strategies.

Arrow's Theorem is a result derived, in economics,
from deductive logic and definitions~\cite{Arrow50}.
According to the theorem, 
every constitution that has 
three innocuous-seeming properties  
(transitivity, unanimity, and independence of irrelevant alternatives,
defined below)
is a dictatorship. 
Arrow's Theorem is surprisingly deep 
and has fundamentally impacted game theory and voting theory
(e.g.,~\cite{Arrow_02_Handbook}).
Yet Arrow's Theorem derives from classical logic.
A quantum version, we find, is false.

Classical constitutions disobey Arrow's Theorem
under certain conditions.
For example, Black supplements Arrow's three postulates 
with extra assumptions~\cite{Black69}.
The extra assumptions, he argues, are properties
as reasonable as Arrow's
for a constitution to have.
No constitution, he shows, can satisfy
Arrow's postulates and the extras
while being a nondictatorship. 
Some sets of votes, however,
prevent constitutions that satisfy Black's extra assumptions
from satisfying all of Arrow's postulates.
Probabilistic mixtures of votes, too, evade Arrow's Theorem.
Suppose that a voter can pledge 
40\% of his/her support to candidate Alice,
40\% to Bob, and 20\% to Charlie.
Such voters can form a constitution
not subject to Arrow's Theorem.
Yet such constitutions do not satisfy all of Arrow's postulates~\cite{Riker_82_Liberalism,Kalai_02_Fourier,Mossel09}. 
We cleave to Arrow's postulates
and avoid restricting voters' preferences.
Rather, we recast Arrow's scheme in quantum terms. 

Like Black's extra postulates and like probabilistic votes,
alternative classical voting schemes evade Arrow's Theorem.
Engaging in \emph{range voting}, a voter assigns each candidate 
a number of points
independently of the other candidates~\cite{Harsanyi_55_Cardinal,Harsanyi_77_Rational}.
Whichever candidate receives the most points wins.
Voters behave identically when using \emph{majority judgment}~\cite{Balinski_10_Majority}.
The majority-judgment winner has the highest median number of points.
Under \emph{approval voting}, each voter assigns each candidate
a thumbs-up or a thumbs-down~\cite{Ottewell_87_Arithmetic}.
Range voting, majority judgment, and approval voting 
contrast with \emph{ordinal voting}.
Ordinal-voting citizens rank candidates.
Arrow's Theorem governs just ordinal voting.
Yet a generalization of Arrow's Theorem, the Gibbard-Satterthwaite (GS) Theorem~\cite{Benoit_00_Gibbard,Reny_01_Arrow's},
governs majority judgment and approval voting.
(Range voting is not a scheme of the class
governed by the GS Theorem,
just as range voting is not a scheme of the class
governed by Arrow's Theorem.)
Like the GS Theorem, our Quantum Arrow Conjecture 
generalizes Arrow's Theorem.
Yet classicality does not constrain our generalization
as it constrains the GS Theorem.
Like range voting, majority judgment, and approval voting,
our voting scheme is not precisely ordinal. 
Yet ordinal rankings form the basis for our quantum votes' Hilbert spaces,
as discussed below.

In addition to disproving a Quantum Arrow Conjecture,
we present four quantum strategic-voting tactics.
How one should vote is not always clear,
even to opinionated citizens.
You might favor a candidate unlikely to win, for example.
Voting for a more likely candidate 
whose policies you could tolerate
can optimize the election's outcome.
Strategic voting is 
the submission of a preference 
other than one's opinion,
in a competition amongst three or more candidates~\cite{Barbera_01_Intro}.
Quantizing voting unlocks new voting strategies.
We exhibit three tactics reliant on entanglement
and one reliant on interference and superpositions.

Earlier work on quantum voting has focused on 
privacy, security, and cryptography~\cite{VaccaroASC07,HilleryZBB06,JiangHNXZ12}.
These references answer questions such as 
``How can voters and election officials hinder cheaters?''
In contrast, we draw inspiration from game theory.

The paper is organized as follows.
First, we introduce our quantum voting system.
We define quantum analogs of properties of classical constitutions.
Four of these properties appear in Arrow's Theorem,
which we review and quantize.
We disprove the conjecture by a counterexample.
The counterexample relies on the quantization of 
a fifth property available to classical constitutions: majority rule.
Finally, we present three strategic-voting strategies
based on entanglement
and one strategy based on interference.

\section{Quantum voting system}
\label{section:InitialDefns}

A voting system involves a \emph{society} 
that consists of \emph{voters}. 
\emph{Candidates} $a, b, \ldots, m$ vie for office.
Each voter ranks the candidates, forming a \emph{preference}.
A preference is a transitive ordered list.
Each candidate is ranked above, ranked below, or tied with 
each other candidate: $a > b$, $a < b$, or $a = b$. 
A list is transitive if $a \geq b$ and $b \geq c$, 
together, imply $a \geq c$. 

The voters' preferences form a \emph{profile}.
The profile serves as input to a \emph{constitution}
during an \emph{election}.
We focus on elections that feature at least three candidates.
The constitution combines the voters' preferences,
forming \emph{society's preference}.
Society's preference implies which candidate wins.

We quantize this classical election scheme.
Our strategy resembles that of Eisert \emph{et al.}~\cite{EisertWL99}.
Their quantum game consists of 
a general quantum process:
a preparation procedure, an evolution, and a measurement~\cite{NielsenC10}.
So does our quantum voting scheme.
We introduce a Hilbert-space formalism
for quantum preferences.
Elections are formulated as quantum circuits~\cite{NielsenC10}.
We define quantum constitutions
and five properties that constitutions can have.

\subsection{Hilbert-space formalism for quantum voters}

Let $\mathcal{S}$ denote a society of voters.
The voters are indexed by $i = 1, 2, \ldots N$.
We associate with voter $i$
the $i^\th$ copy of 
a Hilbert space $\mathcal{H}$. 
The space of density operators 
(unit-trace linear positive-semidefinite operators) 
defined on $\mathcal{H}$
is denoted by $\Dens( \mathcal{H} )$. 

Society is associated with a \emph{joint quantum state} 
$\sigma_\soc  \in   
\Dens  \left( \mathcal{H}^{ \otimes N }  \right)$.
The joint state encodes all the information
in the voters' preferences. 
This information may include correlations,
such as entanglement, between votes.
Consider tracing out every subsystem 
except the $i^\th$.
The result is voter $i$'s \emph{quantum preference},
$\rho_i  :=  \Tr_{\neq i} ( \sigma_\soc )$.
We sometimes denote a pure quantum preference by $\ket{ \rho_i }$. 
The set of all voters' quantum preferences 
forms society's \emph{quantum profile},
$\mathcal{P}  :=  \{ \rho_1, \ldots, \rho_N \}$.

Processing $\mathcal{P}$ must lead to
the identification of a winner.
More generally, the quantum society must generate
a transitive ordered list of the candidates.
We call such a list a \emph{classical preference}.
Each classical preference corresponds to
a state in $\mathcal{H}$.
For example, $c > a = b > d$ corresponds to 
$ \ket{ c{>}a{=}b{>}d }$. 
We denote by $\ket{\gamma}$ 
the $\gamma^{\rm th}$ classical-preference state 
and by $\chi_i^\gamma  :=  \ketbra{\gamma}{\gamma}$ 
the associated density operator.
The set $\{ \ket{ \gamma } \}$ forms 
the \emph{preference basis} $\BH$ 
for $\mathcal{H}$.

Consider any pair $(a, b)$ of candidates.
$\mathcal{H}$ decomposes into subspaces 
associated with the possible relationships 
between $a$ and $b$. 
By $\mathcal{G}^{a > b}$, 
we denote the subspace spanned by 
the $\BH$ elements 
that encode $a > b$
(e.g., $\ket{ a{>}b{=}c}$, $\ket{ c{>}a{>}b}$, etc.). 
The subspaces $\mathcal{G}^{b > a}$ and 
$\mathcal{G}^{a = b}$ are defined analogously. 
For example, $\ket{a{>}b{>}c}$ occupies 
the intersection of three subspaces:
$\ket{a{>}b{>}c} \in \mathcal{G}^{a > b}  \cap  
\mathcal{G}^{a > c}  \cap  
\mathcal{G}^{b > c}$.
The $a > b$, $b > a$, and $a = b$ subspaces are disjoint.
For example, $\mathcal{G}^{a > b}  
\cap  \mathcal{G}^{b > a}  =  \emptyset$.
$\Pi^{a > b}$ denotes 
the projector onto the subspace
$\mathcal{G}^{a > b}$.
The projector $\Pi^{a = b}$ 
is defined analogously.

Consider measuring projectively
a quantum preference $\rho_i$
with respect to $\BH$. 
The measurement yields a classical preference.
If $\rho_i$ is a nontrivial linear combination or mixture 
of $\BH$ elements, 
the measurement is probabilistic. 
A voter's ability to superpose classical preferences resembles 
a prisoner's ability to superpose classical tactics 
in the quantum Prisoner's Dilemma~\cite{EisertWL99}.

During a \emph{quantum election}, 
society's joint state is transformed into 
\emph{society's quantum preference}:
$\sigma_\soc  \mapsto  \rho_\soc  \in  \Dens ( \Hil )$.
This $\rho_\soc$ is measured with respect to $\BH$.
generating society's classical preference.
The quantum election can be formulated
as a quantum circuit~\cite{NielsenC10}.
A quantum constitution, which we now introduce,
implements the transformation.

%
%
%
%
\subsection{Quantum constitutions}
\label{section:Constitutions}

A \emph{classical constitution} $\C$ is a map from
the profile of the voters' classical preferences
to society's classical preference. 
We define quantum constitutions analogously.
Having completed the definition of quantum elections,
we define their classical limit.
The classical constitutions that obey Arrow's Theorem
have four properties. 
We review and quantize these properties.

\subsubsection{Definition of ``quantum constitution''}

Quantum constitutions have the form of
general quantum evolutions,
as does the Quantum Prisoner's Dilemma~\cite{EisertWL99}.
A general quantum evolution is 
a convex-linear completely positive trace-preserving 
(CPTP) map~\cite{NielsenC10}.
A map $\mathcal{E}$ is convex-linear if,
given a probabilistic combination $\sum_i p_i \rho_i$ 
of states $\rho_i$,
$\mathcal{E}$ transforms the component states independently:
$\mathcal{E} \left(  \sum_i  p_i  \rho_i  \right)
   =  \sum_i  p_i  \mathcal{E} ( \rho_i ) \, ,$
wherein   $p_i  \geq  0  \;  \forall i$  and
$\sum_i  p_i  =  1$~\cite{NielsenC10}.
Every CPTP map is equivalent to
the tensoring on of an ancilla, 
a unitary transformation of the system-and-ancilla composite,
and the tracing out of a subsystem~\cite{NielsenC10}.

Each quantum constitution accepts, as input,
society's joint state, $\sigma_\soc$,
and an ancilla.
The ancilla is initialized to a fiducial state $\ketbra{0}{0}$.
When outputted by the constitution,
the ancilla holds society's quantum preference, $\rho_\soc$.

\begin{definition}[Quantum constitution]
\label{definition:Constitution}
A \emph{quantum constitution} is a convex-linear CPTP map 
\begin{align*}
   \mathcal{E} : 
   \Dens  \left( \mathcal{H}^{ \otimes (N + 1) }   \right)  
   \to   \Dens( \mathcal{H} )
\end{align*}
that transforms society's joint state and an ancilla 
into society's quantum preference:
\begin{equation}
   \mathcal{E}(\sigma_\soc  \otimes  \ketbra{0}{0} )  =  \rho_\soc  \, .
\end{equation}
\end{definition}

Having defined constitutions, we can define 
the classical limit.
\begin{definition} \label{definition:ClassLim}
The \emph{classical limit} of a quantum election is
the satisfaction of the following conditions:
\begin{enumerate}
   
   \item Every quantum voter preference $\rho_i$ is
   an element of the preference basis $\mathcal{B}$.
   
   \item The quantum constitution $\mathcal{E}$ consists of
   classical probabilistic logic gates.
   
\end{enumerate}
\end{definition}  \noindent
In the classical limit, $\mathcal{E}$ can output only
elements of $\mathcal{B}$ and probabilistic combinations thereof.

Classical and quantum constitutions can have various properties.
Four properties appear in Arrow's Theorem.
We review these classical properties, then quantize them.

%
%
%
%
\subsubsection{The four constitutional properties in Arrow's Theorem
and quantum analogs}
\label{section:Properties}

Arrow's Theorem features four properties
available to classical constitutions:
transitivity, respecting of unanimity, 
respecting of independence of irrelevant alternatives, 
and being a dictatorship.
We review these properties and define quantum analogs.

Two principles guide the quantization strategy.
First, each quantum definition should preserve
the corresponding classical definition's spirit.
Second, each quantum definition should make sense 
in the context of entanglement and superpositions---should
be able to characterize a quantum circuit.

A classical constitution $\C$ is \emph{transitive} if 
every classical preference in its range is transitive. 
Suppose that society prefers candidate $a$ to $b$
and prefers $b$ to $c$.
$\C$ outputs a societal preference
in which $a$ ranks above $c$:
$a \geq b$ and $b \geq c$, together, imply $a \geq c$.
%
%
\begin{definition}[Quantum transitivity]
\label{definition:Transitivity}
A quantum constitution $\mathcal{E}$ respects \emph{quantum transitivity} if every possible output $\rho_\soc$, upon being measured in 
the preference basis $\BH$, 
collapses to a state $\ket{a \ldots m}$ associated with 
a transitive classical preference $(a \ldots m)$.
\end{definition}
\noindent Every $\mathcal{E}$ obeys quantum transitivity by definition: 
Given any input, $\mathcal{E}$ outputs a $\rho_\soc$ that is 
a linear combination or a mixture of preference-basis elements. 
A $\BH$ measurement of $\rho_\soc$ yields 
a $\BH$ element.
Every $\BH$ element corresponds to 
a transitive classical preference.

Classical \emph{unanimity} is defined as follows.
Let $\C$ denote a classical constitution that respects unanimity.
Suppose that every voter ranks 
a candidate $a$ strictly above
a candidate $b$: $a > b$.
The constitution outputs a societal preference
in which $a$ ranks strictly above $b$: $a > b$.
%
%
\begin{definition}[Quantum unanimity]
\label{definition:Unanimity}
A quantum constitution $\E$ respects \emph{quantum unanimity} if 
it has the following two subproperties:
\begin{enumerate}

   \item  \label{item:Have}
Suppose that every voter's quantum preference 
has support on the $a > b$ subspace:
$\Tr \left( \Pi^{a > b}  \:   \rho_i   \right)  >  0  \; \, 
\forall i  = 1, 2, \ldots, N$.
$\E$ outputs a societal quantum preference $\rho_\soc$ 
that has support on that subspace: 
$\Tr  \left(  \Pi^{a > b}  \:   \rho_\soc  \right)  >  0$.

   \item  \label{item:Only}
   Suppose that every voter's quantum preference 
has support only on the $a > b$ subspace.
$\E$ outputs a societal quantum preference $\rho_\soc$ 
that has support only on that subspace:
\begin{align}
   & \supp (\rho_i)  \subseteq  \Pi^{a > b}
   \quad \forall i = 1, 2, \ldots, N
   \quad  \Rightarrow  \quad
   \nonumber \\ & 
   \supp ( \rho_\soc )  \subseteq   \Pi^{a > b}  \, ,
\end{align}
wherein $\supp (\rho)$ denotes 
the support of the quantum state $\rho$.

\end{enumerate}
\end{definition}

Subproperty~\ref{item:Only} might appear extraneous,
seeming to lack a classical counterpart.
But classical unanimity satisfies
the classical analog of~\ref{item:Only} implicitly,
as the following argument shows.
\begin{enumerate}[label=(\Alph*)]
   \item
Suppose that every voter's preference 
satisfies the classical analog of
having support only on $\Pi^{a > b}$:
Every voter prefers $a > b$ strictly.
   \item  \label{item:Diff1}
Society prefers $a > b$ strictly,
by the definition of classical unanimity.
   \item  \label{item:Diff2}
Hence society ranks $a$ and $b$ 
neither as $a = b$ nor as $b > a$.
   \item
Hence society's preference 
satisfies the classical analog of
having support only on $ \Pi^{a > b}$.
\end{enumerate}
Definition~\ref{definition:Unanimity} must contain 
subproperty~\ref{item:Only} explicitly because
the quantum analog of step~\ref{item:Diff1}
does not imply
the quantum analog of step~\ref{item:Diff2}.
Even if $\rho_\soc$ has support on $\Pi^{a > b}$,
$\rho_\soc$ can have support on $\Pi^{a = b} \, :$
$\rho_\soc$ can be a linear combination of
elements of $B_\soc$
or can be a mixture.
The generality of quantum states
necessitates the articulation of subproperty~\ref{item:Only}.

Classical \emph{independence of irrelevant alternatives} (IIA)
is defined as follows.
In every classical preference, 
the candidates $a$ and $b$ have some \emph{relative ranking}.
Either $a > b$, $b > a$, or $a = b$.
Suppose that society's relative ranking of $a$ and $b$ 
depends only on
every voter's relative ranking of $a$ and $b$.
Whether society prefers $a$ to $b$ 
(or prefers $b$ to $a$, etc.)
depends only on
whether each voter prefers $a$ to $b$
(or prefers $b$ to $a$, etc.).
How any voter ranks candidate $c$
fails to influence society's relative ranking of $a$ and $b$.
%
%
\begin{definition}[Quantum independence of irrelevant alternatives]
A quantum constitution respects 
\emph{quantum independence of irrelevant alternatives} (QIIA) if 
whether $\rho_\soc$ has support on 
$\mathcal{G}^{a > b}$, on 
$\mathcal{G}^{a < b}$, and/or on
$\mathcal{G}^{a = b}$ depends only on 
whether each $\rho_i$ has support on 
$\mathcal{G}^{a > b}$,  on $\mathcal{G}^{a < b}$,  and/or on 
$\mathcal{G}^{a = b}$.
\end{definition}

A classical dictatorship has a dominant voter. 
Suppose that society prefers $a$ strictly to $b$
if and only if
some voter $i$ prefers $a$ strictly to $b$,
for all pairs $(a, b)$ of candidates:
\begin{align}
   & \exists i  \:  :  \: 
   a > b \, ,  \;  \text{according to $i$,}
   \quad \Leftrightarrow \quad
   \nonumber \\ & \qquad  \;
   a > b \, ,  \;  \text{according to society,}  \quad
   \forall a, b  \, .
\end{align}
The classical constitution $\C$ that outputs society's preference
is a classical \emph{dictatorship}.
%
%
\begin{definition}[Quantum dictatorship]
\label{definition:Dictator}
A quantum constitution is a \emph{quantum dictatorship} if 
there exists a voter $i$ who has 
the following two characteristics: 
\begin{enumerate}
   \item
   Society's quantum preference has support on the $a > b$ subspace
   if and only if voter $i$'s has:
\begin{align}
   \Tr \left(   \Pi^{a > b}   \rho_i   \right)   >   0 
   \quad \Leftrightarrow \quad
   \Tr \left(   \Pi^{a > b}   \rho_\soc   \right)   >   0.
\end{align}

   \item  \label{item:Only2}
   Society's quantum preference has support
   only on the $a > b$ subspace
   if and only if voter $i$'s has:
\begin{align}
   \supp ( \rho_i )   \subseteq    \Pi^{a > b}
   \quad \Leftrightarrow \quad
   \supp ( \rho_\soc )   \subseteq   \Pi^{a > b}   \, .
\end{align}

\end{enumerate}
\end{definition}
\noindent Subproperty~\ref{item:Only2} plays a role analogous to 
subproperty~\ref{item:Only} in the definition of ``quantum unanimity.''

We have constructed quantum analogs of
the four properties in Arrow's Theorem.
A quantum version of Arrow's Theorem, we show,
is violated by
a quantum version of majority rule.


%
%
%
%
\subsection{Majority rule}
\label{section:MajorityRule}

Majority rule is a fifth property that constitutions can have.
We review classical majority rule,
then introduce a quantum analog.
\emph{Cyclic} voting preferences
prevent classical majority rule 
from satisfying Arrow's assumptions.
Quantum majority rule is more robust.

\subsubsection{Classical majority rule}

Let $\mathcal{P}_\class$ denote a classical society's voter profile.
Let $\C$ denote a classical constitution
that respects majority rule.
$\C$ reflects the wishes shared by most voters.
Suppose that over half the voters agree
on the relative ranking of candidates $a$ and $b$.
$\C$ outputs a classical societal preference 
that has the same relative ranking of $a$ and $b$.

A subtlety arises if $\mathcal{P}_\class$ involves a cycle.
Let $T = \{a, b, \ldots, k\}$ denote
a set of candidates.
Suppose that $a$ and $b$ participate, in $\mathcal{P}_\class$,
in pairwise preferences that violate transitivity.
Suppose that every pair of candidates in $T$ does.
$T$ forms a classical \emph{cycle}.

For  example, let 
$\mathcal{P}_\class   =   \{ (a > b > c),  (c > a > b),  (b > c > a) \}.$
A na\"{i}ve application of majority rule implies 
$a > b$ and $b > c$.
Transitivity implies $a > c$. 
But a na\"{i}ve application of majority rule 
implies also $c > a$.
But $c > a$, combined with the previously derived $a > c$, 
violates transitivity. 
The constitution may be defined 
as outputting $a = b = c$ or 
as outputting an error message.\footnote{
One profile can contain multiple cycles. For example, 
$\{ (a > b > c),  (b > a > c),  (a > c > b) \}$ contains a cycle over $(a, b)$ (because voters 1 and 3 rank $a > b$, whereas voter 2 ranks $b > a$) and a cycle over $(b, c)$ (because voters 1 and 2 rank $b > c$, whereas voter 3 ranks $c > b$).}

Cycles prevent classical majority rule from satisfying 
IIA and transitivity simultaneously.
Classical majority rule fails to satisfy Arrow's assumptions.
Hence classical majority rule
cannot contradict Arrow's Theorem.
A quantum analog of majority rule can.

\subsubsection{Quantum majority rule}

First, we introduce quantum cycles. 
We then define the Quantum Majority-Rule (QMR) constitution $\EQMR$.
This constitution, we show, respects
quantum transitivity, quantum unanimity, and QIIA.
These properties will enable $\EQMR$ to violate
a quantum analog of Arrow's Theorem.

\textbf{Quantum cycles:} 
Let $\chi_1^\alpha  \otimes  \ldots  \otimes  \chi_N^\mu$ be 
a product of preference-basis elements. 
Suppose that at least two 
$\chi_i^\gamma$'s are pure states labeled by  
classical preferences that form a classical cycle. 
The product will be said to contain a \emph{quantum cycle}.

\textbf{Operation of the Quantum Majority-Rule constitution:}
$\EQMR$ performs the following sequence of steps.
First, $\EQMR$ decoheres each quantum preference $\rho_i$
with respect to the preference basis:
\begin{align}
   \rho_i  \mapsto  \sum_{ \gamma }   
   \ketbra{\gamma}{\gamma}   \rho_i   \ketbra{\gamma}{\gamma}
   =   \sum_\gamma  p_i^\gamma  \chi_i^\gamma
   =:  \rho'_i \, ,
\end{align}
wherein  $\sum_\gamma  p_i^\gamma  =  1 \, .$
Society's quantum profile evolves as
\begin{align} 
   \label{eq:QMR_Help0a}
   \sigma_\soc  & \mapsto  
   \rho'_1  \otimes   \ldots   \otimes  \rho'_N   \\
   & =  \label{eq:QMR_Help1}
   \sum_{\alpha, \ldots, \mu}  
              \left(   p_1^\alpha   \ldots   p_N^\mu   \right)  
              (  \chi_1^\alpha   \otimes   \ldots   \otimes   \chi_N^\mu  )   \, .
\end{align}
Recall that $\chi_1^\alpha$ denotes
the element, labeled by the classical preference $\alpha$,
of the preference basis $\BH$ for 
voter 1's Hilbert space $\mathcal{H}$.

$\EQMR$, being a quantum constitution, obeys convex linearity.
To specify how $\EQMR$ transforms the right-hand side 
of Eq.~\eqref{eq:QMR_Help1},
we must specify just how $\EQMR$ transforms
each factor $\chi_1^\alpha   \otimes   \ldots   \otimes   \chi_N^\mu$.

For each factor, $\EQMR$ constructs a directed graph, or digraph.
One vertex is formed for each candidate.
The edges are governed by 
$\chi_1^\alpha   \otimes   \ldots   \otimes   \chi_N^\mu$.
If more classical preferences $\gamma$ correspond to $a > b$ 
than to $b > a$,
an edge points from $a$ to $b$.
If exactly as many $\gamma$'s correspond to $a > b$
as to $b > a$,
an edge points from $a$ to $b$ and from $b$ to $a$.

$\EQMR$ inputs the digraph into \emph{Tarjan's algorithm}~\cite{Tarjan72}.
Tarjan's algorithm finds a digraph's strongly connected components.
A \emph{strongly connected component} (SCC) is a subgraph.
Every vertex in the subgraph can be accessed from
every other vertex via edges.
Every vertex appears in exactly one SCC.
Every SCC in the QMR graph represents 
a cycle or a set of interlinked cycles.
For example, let $\chi_1^\alpha   \otimes   \ldots   \otimes   \chi_N^\mu
= \ketbra{b{>}a{>}c{>}d}{ b{>}a{>}c{>}d}  \otimes  
\ketbra{a{>}c{>}b{>}d }{ a{>}c{>}b{>}d }$.
Candidates $a$ and $b$ participate in a cycle, 
as do $b$ and $c$. 
The $a$, $b$, and $c$ vertices form an SCC.
The $d$ vertex forms another SCC.
The digraph appears in Fig.~\ref{fig:SCC}.

%
%
\begin{figure}[hbt]
\centering
\includegraphics[width=.5\textwidth, clip=true]{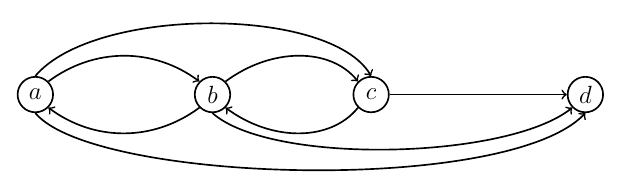}
\caption{\textbf{Example digraph formed by 
the Quantum Majority Rule (QMR) constitution $\EQMR$:}
Consider a society that consists of two voters.
Suppose that they submit the quantum preferences (votes)
$\ket{b{>}a{>}c{>}d}$ and $\ket{ a{>}c{>}b{>}d }$.
Voter 1 prefers candidate $b$ strictly to candidate $a$, etc.
Voter 2 prefers $a$ strictly to $c$, etc.
$\EQMR$ maps this set of preferences to a digraph.
Each candidate is associated with a vertex.
The voters' preferences determine the edges.
For example, most voters prefer candidate $a$ to candidate $d$.
Hence an edge points from vertex $a$ to vertex $d$.
Exactly as many voters prefer $a$ to $b$
as prefer $b$ to $a$.
Hence a doubly directed edge connects $a$ with $b$.
Vertices $a$ and $b$ form one cycle, 
while $b$ and $c$ form another.
$a$, $b$, and $c$ form 
a strongly connected component (SCC).
So does vertex $d$.
Tarjan's algorithm identifies digraphs' SCCs.
$\EQMR$ uses Tarjan's algorithm to form
society's quantum preference, 
to determine who wins the election.}
\label{fig:SCC}
\end{figure}

Tarjan's algorithm returns a list of the SCCs.
The later an SCC appears in the list,
the more popular the SCC's candidates, roughly speaking.
More precisely, let $i$ and $j$ label SCCs 
such that $i < j$.
Every vertex in the $j^{\rm th}$ SCC is preferred to 
every vertex in the $i^{\rm th}$.
For example, Tarjan's algorithm maps Fig.~\ref{fig:SCC} to 
\mbox{$( \{ d \},  \{a, b, c\} )$.}

Consider the strict classical preferences in which
every candidate in the $j^\th$ SCC ranks above
every candidate in the $i^\th$ SCC,
for all $j > i$.
$\EQMR$ forms a maximally mixed state $\rho_\soc'$
over the corresponding preference-basis elements.
In our example,
\begin{align}
   \label{eq:QMR_Help0}
   \rho_\soc'  & =  \frac{1}{ 6 }  
   (  \ketbra{ abcd }{ abcd }  +  \ketbra{ acbd }{ acbd }  +
      \ketbra{ bacd }{ bacd }  
      \nonumber \\ & \quad
      +  \ketbra{ bcad }{ bcad }
      +  \ketbra{ cabd }{ cabd }  +  \ketbra{ cbad }{ cbad }  )  \, .
\end{align}

$\EQMR$ then ``gives the minority a shot.''
For any candidate pair $(a, b)$,
suppose that at least one $\chi_i^\gamma$ corresponds 
to $a > b$.
The constitution spreads an amount $\delta \in (0, 1)$ of weight
across the $a  > b$ subspace:\footnote{
``Giving the minority a shot'' resembles the action of 
the United States electoral college.
Whichever candidate receives the most popular votes
usually wins the presidential election.
But a candidate who receives a minority
can win the presidency
if favored by enough of the electoral college.}
\begin{align}
   \label{eq:Minority}
   \rho'_\soc  \mapsto  \rho''_\soc
   =  ( 1 - \delta )  \rho'_\soc  
   +  \delta   \,   \Pi^{a > b }  \, .
\end{align}
This $\delta$ serves as a parameter inputted to the constitution.
We omit $\delta$ from the notation $\EQMR$
for conciseness.

Next, $\EQMR$ enforces unanimity.
Suppose that every $\chi_i^\gamma$ corresponds to $a > b$,
for any candidate pair $(a, b)$:
$\supp ( \chi_i^\gamma )  \subseteq  \Pi^{a > b }  
\;  \;  \forall i  =  1, 2,  \ldots,  N$.
The constitution projects $\rho''_\soc$ onto 
the $a > b$ subspace:
\begin{align}
   \label{eq:Project_QMR}
   \rho''_\soc  \mapsto  \rho'''_\soc 
   =  \Pi^{a > b }    \:   \rho''_\soc   \:  \Pi^{a > b}  \, .
\end{align}

We have seen how $\EQMR$ calculates the $\rho'''_\soc$
associated with each term in Eq.~\eqref{eq:QMR_Help1}.
Each $ (  \chi_1^\alpha   \otimes   \ldots   \otimes   \chi_N^\mu  )$
in Eq.~\eqref{eq:QMR_Help1} is replaced with
the corresponding $\rho'''_\soc$.
This replacement yields $\rho_\soc$.
The $\rho_\soc$ is measured with respect to $\BH$.
The measurement yields
society's classical preference.

\textbf{Three properties of QMR:}
$\EQMR$ has three of the properties 
introduced in Sec.~\ref{section:Properties}.
These properties will enable $\EQMR$ to violate
a quantum analog of Arrow's Theorem.

\begin{lemma} \label{lemma:MajProperties}
The Quantum Majority-Rule constitution $\EQMR$ respects 
quantum transitivity, quantum unanimity, and 
quantum independence of irrelevant alternatives.
\end{lemma}

\begin{proof}
Every quantum constitution respects quantum transitivity,
as explained below Definition~\ref{definition:Transitivity}.
$\EQMR$ is a quantum constitution.
Therefore, $\EQMR$ respects quantum transitivity.

Quantum unanimity involves two subproperties 
(see Definition~\ref{definition:Unanimity}).
$\EQMR$ respects subproperty~\ref{item:Have}
due to Tarjan's algorithm and Eq.~\eqref{eq:QMR_Help0}.
Suppose that every voter's quantum profile
has support on the $a > b$ subspace.
Most quantum profiles have support on that subspace.
Hence $a$ appears in the $b$ SCC
or in an SCC ``preferred to'' the $b$ SCC.
Hence $\rho'_\soc$ contains preference-basis elements 
associated with $a > b$.

Equation~\eqref{eq:Project_QMR} ensures that
$\EQMR$ respects subproperty~\ref{item:Only} of quantum unanimity.
Suppose that every voter's quantum preference has support
only on the $a > b$ subspace.
Every $\chi_i^\gamma$ has support 
only on the $a > b$ subspace.
$\EQMR$ projects $\rho''_\soc$ onto $\mathcal{G}^{a > b }$,
not onto $\mathcal{G}^{b > a}$ or onto $\mathcal{G}^{a = b}$.
Therefore, $\rho'''_\soc$ has support only on $\mathcal{G}^{a > b }$.

According to QIIA, whether $\rho_\soc$ has support on
$\mathcal{G}^{a > c} \, $ on $\mathcal{G}^{c > a} \, $
and/or on $\mathcal{G}^{a = c}$
depends only on 
whether each voter's quantum preference, $\rho_i$,
has support on these subspaces---not
on whether any $\rho_i$ has support on, 
e.g., $\mathcal{G}^{a > b}$.
To check that $\EQMR$ respects QIIA,
we must analyze three cases:
\begin{enumerate}

   \item $a$ does not participate in a cycle with $c$.
   
   \begin{enumerate}
   
      \item \label{item:Case1a}
      $a$ participates in a cycle with 
      at least one candidate 
      that participates in a cycle with $c$.
      For example, $a$ may participate in a cycle with $b$,
      while $b$ participates in a cycle with $c$.
      
      \item \label{item:Case1b}
      $a$ participates in no cycle with any candidate
      that participates in a cycle with $c$.
      
    \end{enumerate}
    
    \item \label{item:Case2}
    $a$ participates in a cycle with $c$.

\end{enumerate}

Case~\ref{item:Case1a} requires the most thought.
Considering the example illustrated in Fig.~\ref{fig:SCC} suffices.
$a$ does not participate in a cycle with $c$.
Yet $a$ participates in a cycle with $b$,
which participates in a cycle with $c$.
Therefore, $a$ appears in the same SCC as $c$.
According to Eq.~\eqref{eq:QMR_Help0},
$\rho'_\soc$ has support on $\mathcal{G}^{c > a}$.
Yet every quantum voter preference $\rho_i$
has support only on $\mathcal{G}^{a > c}$.
How voters rank $b$ seems to influence
how society ranks $a$ relative to $c$.
$\EQMR$ seems to violate QIIA.

Equation~\eqref{eq:Project_QMR} rectifies
this seeming violation.
$\rho''_\soc$ is projected onto
the $a > c$ subspace,
because every $\chi_i^\gamma$ corresponds to $a > c$.

But suppose that not every $\chi_i^\gamma$ corresponded to $a > c$.
Suppose that only a majority of $\chi_i^\gamma$'s did.
$\rho''_\soc$ would not be projected onto $\mathcal{G}^{a > c}$.
How voters ranked $b$ would again seem to influence
how society ranked $a$ relative to $c$.
$\EQMR$ would again seem to violate QIIA.
$\EQMR$ would not because of Eq.~\eqref{eq:Minority}.
Some $\chi_i^\gamma$'s have support on 
the $c > a$ subspace.
The ``give the minority a shot'' step therefore
gives $\rho''_\soc$ support
on $\mathcal{G}^{c > a}$.
Society's quantum preference would have 
support on $\mathcal{G}^{c > a}$
regardless of whether 
$a$ participated in a cycle with
a $b$ that participated in a cycle with $c$.
How voters rank $b$ therefore does not affect
how society ranks $a$ relative to $c$.

In case~\ref{item:Case1b}, $a$ does not participate in
a cycle with any $b$ that participates in
a cycle with $c$.
Therefore, $a$ appears in an SCC
that ``is preferred'' to the $c$ SCC.
$\rho'_\soc$ therefore has support on 
just the $a > c$ subspace, 
regardless of any $b$'s.

In case~\ref{item:Case2}, $a$ participants in a cycle with $c$.
$\rho'_\soc$ has support on
the $a > c$ and $c > a$ subspaces,
regardless of any $b$'s.

\end{proof}
\noindent Because QMR satisfies the quantum analogs
of three properties in Arrow's Theorem, 
QMR can violate 
a quantum analog of Arrow's Theorem.

%
%
%
%
\section{Arrow's Impossibility Theorem}
\label{section:ArrowTheorem}

Transitivity, unanimity, and IIA
have innocent-sounding definitions.
They seem unlikely to buttress authoritarianism.
Yet possessing these properties, Arrow shows,
renders a classical constitution a dictatorship~\cite{Arrow50}.
\begin{theorem}[Arrow's Impossibility Theorem]
Consider any (classical) constitution
used, with ranked voter preferences, 
to select from amongst at least three candidates.
If the constitution respects 
transitivity, unanimity, and independence of irrelevant alternatives,
the constitution is a dictatorship.
\end{theorem}
\noindent 
Multiple proof exist~\cite{Arrow50,Barbera_80_Pivotal,Geanakoplos05}.
Some involve a \emph{pivotal voter} $v$~\cite{Barbera_80_Pivotal,Geanakoplos05}.
If $v$ changes his/her mind
while all other preferences remain constant,
society's preference changes.
One proves first that the postulates imply
the existence of a voter slightly weaker than $v$.
This voter, one then shows, is pivotal and is a dictator.
No other dictator, one concludes, can exist.

We quantize Arrow's Theorem in the following conjecture.
\begin{conjecture}[Quantum Arrow Conjecture] 
\label{conjecture:Arrow}
Every quantum constitution that respects quantum transitivity, quantum unanimity, 
and quantum independence of irrelevant alternatives 
is a quantum dictatorship. 
\end{conjecture}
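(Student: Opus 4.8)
The plan is to \emph{disprove} Conjecture~\ref{conjecture:Arrow} by exhibiting an explicit counterexample: the Quantum Majority-Rule constitution of Definition~\ref{definition:QMR}. I would argue that QMR respects all three hypotheses of the conjecture yet fails to be a quantum dictatorship, so the implication cannot hold.

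First I would dispatch the three properties. Quantum transitivity holds for \emph{every} quantum constitution by the remark following its definition: any $\rho_\soc$ collapses, under a $\mathcal{B}_\soc$ measurement, onto a basis element, i.e., onto a transitive classical preference. Quantum unanimity and QIIA are exactly the content of the QMR lemma proved in Appendix~\ref{section:QMRLemma}, which I would invoke directly. These are the substantive checks: unanimity is enforced by the final projection step~\ref{item:Step2e}, and QIIA is the delicate one, since classical majority rule famously fails IIA through cycles---the SCC construction is precisely what repairs this, because candidates trapped in a common cycle are mapped to a maximally mixed block that carries support on \emph{both} orderings of the pair.

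The crux is then to show QMR is \emph{not} a quantum dictatorship, i.e., that no voter $i$ satisfies conditions (i) and (ii) of Definition~\ref{definition:Dictator} for all pairs and all inputs. Here I would construct, for each candidate dictator $i$, a single product input that overrules that voter. Take $M = 2$ candidates $a, b$ and $N = 3$ voters, and feed in the classical product state in which voter $i$ ranks $a > b$ while the other two voters rank $b > a$. Counting edges (step~\ref{item:Step2c}) yields a single edge from $b$ to $a$; Tarjan's algorithm returns the two singleton SCCs $(\{a\}, \{b\})$ with $b$ placed above $a$; and QMR outputs $\rho_\soc \propto \ketbra{b > a}{b > a}$, with no unanimity projection triggered (since the voters disagree on the pair). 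Then $\Tr(\Pi^{a>b}_i \rho_i) > 0$ while $\Tr(\Pi^{a>b}_\soc \rho_\soc) = 0$, so condition (i) fails for voter $i$. Because $i$ was arbitrary, no voter is a dictator, and QMR is a valid counterexample to Conjecture~\ref{conjecture:Arrow}.

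I expect the main obstacle to lie not in the non-dictatorship argument, which is elementary, but in the QIIA verification underlying the appendix lemma: one must confirm that the SCC-and-mixing prescription makes society's \emph{support pattern} on $\{a>b,\ a<b,\ a=b\}$ a function of the voters' support patterns on that same pair alone, despite the global, cycle-sensitive nature of Tarjan's algorithm. The two-candidate counterexample above is chosen precisely to sidestep cycles and SCC subtleties, so that the failure of dictatorship is manifest and logically independent of those complications.
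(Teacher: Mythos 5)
Your overall strategy coincides with the paper's: exhibit QMR as the counterexample, invoke Lemma~\ref{lemma:MajProperties} for quantum transitivity, quantum unanimity, and QIIA, and then display an explicit profile on which no voter can be a dictator. You also correctly locate the real mathematical weight in the QIIA verification for cyclic inputs. Where you diverge is the non-dictatorship witness, and this is where I have a concrete objection: you take $M=2$ candidates and simply outvote the putative dictator. Arrow's Impossibility Theorem presupposes at least three alternatives --- for two candidates, \emph{classical} majority rule already satisfies transitivity, unanimity, and IIA while being a non-dictatorship (this is May's-theorem territory), so the classical theorem is false in that regime too. A two-candidate counterexample therefore attacks only the letter of Conjecture~\ref{conjecture:Arrow} (which admittedly omits the $M\geq 3$ hypothesis), not its intent, and it demonstrates nothing quantum: your witness profile is cycle-free, so QMR acts on it exactly as classical majority rule would. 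The paper instead feeds QMR the Condorcet profile $\{\ket{abc},\ket{cab},\ket{bca}\}$, precisely the input on which classical majority rule cannot simultaneously satisfy transitivity and IIA; QMR maps it to a mixed $\rho_\soc$ supported on the whole cycle's SCC, and non-dictatorship follows because each voter's pure preference lacks support on some subspace where $\rho_\soc$ has support (failing condition (ii) of Definition~\ref{definition:Dictator}, rather than condition (i) as in your argument). The fix on your end is easy --- move to $M\geq 3$, e.g.\ voter $i$ submits $\ket{a>b>c}$ against two copies of $\ket{b>a>c}$, or better, adopt the cyclic profile --- but as written your witness both sits outside the theorem's natural scope and fails to exhibit the mechanism (quantum handling of cycles) that makes the disproof interesting.
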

\begin{theorem} \label{theorem:Arrow}
The Quantum Arrow Conjecture is false.
\end{theorem}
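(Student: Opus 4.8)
The plan is to disprove the Quantum Arrow Conjecture by exhibiting a single explicit counterexample: a quantum constitution that respects quantum transitivity, quantum unanimity, and QIIA, yet is not a quantum dictatorship. The natural candidate is the Quantum Majority-Rule (QMR) constitution of Definition~\ref{definition:QMR}, since the surrounding discussion has been building toward exactly this object. So the disproof reduces to two tasks: first, verifying that QMR satisfies all three hypotheses of the conjecture; second, demonstrating that QMR fails to be a quantum dictatorship.

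For the first task, quantum transitivity is immediate, since (as already observed after the definition of quantum transitivity) \emph{every} quantum constitution respects it by construction---any $\mathcal{B}_\soc$ measurement of $\rho_\soc$ collapses onto a transitive classical-preference state. The remaining two properties, quantum unanimity and QIIA, are the substantive checks; I would cite the verification deferred to Appendix~\ref{section:QMRLemma}, where these are established. The key mechanism is that the unanimity-projection step (Step~\ref{item:Step2e}) guarantees property~(i) of quantum unanimity---if every voter supports $a>b$, the final projection forces $\rho_\soc$ onto $\mathcal{G}^{a>b}_\soc$---while the SCC/Tarjan structure, which depends only on pairwise vote counts, is what secures QIIA: whether $\rho_\soc$ has support on $\mathcal{G}^{a>b}_\soc$ is determined entirely by the relative tallies of $a>b$ versus $b>a$ among the $\chi_i^\gamma$'s, hence only by each voter's pairwise support on $\mathcal{G}^{a>b}_i$ and $\mathcal{G}^{b>a}_i$.

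For the second task---showing QMR is not a dictatorship---I would construct a profile on which no voter's preferences are faithfully mirrored by society. The cleanest witness is a three-candidate Condorcet cycle such as $\mathcal{P} = \{(a>b>c),\,(b>c>a),\,(c>a>b)\}$, supplied as the product input $\ket{a>b>c}\otimes\ket{b>c>a}\otimes\ket{c>a>b}$. Here the pairwise tallies are tied for every pair, so Tarjan's algorithm returns a single SCC containing $a$, $b$, and $c$, and QMR outputs the maximally mixed state over all classical preferences on $\{a,b,c\}$. This output has support on the $a>b$, $b>a$, \emph{and} $a=b$ subspaces simultaneously. I would then argue that this violates Definition~\ref{definition:Dictator} for \emph{every} voter $i$: take the designated dictator; since $\rho_\soc$ has support on $\mathcal{G}^{a>b}_\soc$ regardless of how voter $i$ ranks $a$ and $b$, property~(ii) of the dictatorship definition fails for whichever pair voter $i$ ranks the ``wrong'' way around the cycle. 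Because the cycle is symmetric, no single voter can be the dictator, and QMR is therefore not a quantum dictatorship.

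The main obstacle I anticipate is not the counterexample itself but the unanimity/QIIA verification, and in particular a subtle interaction between them. The QIIA check requires that the support structure of $\rho_\soc$ on the $(a,b)$ subspaces be a function of the $(a,b)$ pairwise data alone; but the unanimity-projection step in Step~\ref{item:Step2e} and the SCC-ordering in Step~\ref{item:Step2d} are both global operations that could, in principle, let the ranking of a third candidate $c$ leak into the $a$-versus-$b$ verdict. I would need to confirm that the projection onto $\mathcal{G}^{a>b}_\soc$ is triggered purely by unanimous $a>b$ support and that the SCC decomposition never forces $a>b$ into the output for reasons depending on $c$ alone---this is precisely what the appendix lemma must handle carefully. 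Assuming that lemma, the theorem follows: QMR satisfies all three hypotheses yet is no dictatorship, so the conjecture is false.
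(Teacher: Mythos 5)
Your proposal follows essentially the same route as the paper: QMR applied to the three-voter Condorcet cycle, with the three hypotheses verified via the appendix lemma and non-dictatorship read off from the fact that $\rho_\soc$ has strictly larger support than any individual vote. One small slip: for this profile the pairwise tallies are not tied but are $2$--$1$ majorities whose edges $a \to b \to c \to a$ form a directed cycle, which is why Tarjan's algorithm returns a single SCC; the conclusion is unaffected.
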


\begin{proof}
We disprove the conjecture by counterexample.
The QMR constitution is combined with
a societal joint state $\sigma_\soc$
that encodes a cycle.
This combination, we show, lacks a dictator.
We have shown that QMR satisfies 
quantum transitivity, quantum unanimity, and QIIA.
Satisfying the conjecture's assumptions
but not its conclusion,
QMR and cyclic voting disprove the conjecture.

For simplicity, we focus on strict pairwise preferences.
We consider, e.g., $a > b$, ignoring $a = b$.
This focus frees us to drop binary-relation symbols:
\mbox{$\ket{a b c}  :=  \ket{a{>}b{>}c}$}.

%
%
\begin{figure}[hbt]
\centering
\includegraphics[width=.25\textwidth, clip=true]{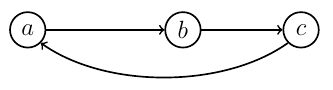}
\caption{\textbf{Digraph representation of the quantum votes
used to disprove the Quantum Arrow Conjecture:}
A quantum analog of majority rule,
acting on the quantum votes in Eq.~\eqref{eq:VoterPrefs},
violates a quantum analog of Arrow's Theorem. 
All three candidates---$a$, $b$, and $c$---form one cycle,
as depicted by the pattern of arrows.
Cycles prevent classical majority-rule constitutions
from satisfying all of Arrow's postulates.
Quantum majority rule is more robust.}
\label{fig:Disproof}
\end{figure}

Suppose that society's joint state is
a product that involves a cycle:
\begin{align}  \label{eq:VoterPrefs}
   \sigma_\soc  =  \ketbra{abc}{abc}  \otimes  \ketbra{ cab }{ cab }  
   \otimes  \ketbra{ bca }{ bca }  \, .
\end{align}
Decoherence relative to the preference basis
preserves the state.
$\EQMR$ constructs the digraph in Fig.~\ref{fig:Disproof}.
One edge points from $a$ to $b$ 
(because two voters prefer $a > b$, whereas one prefers $b > a$), 
one edge points from $b$ to $c$, 
and one edge points from $c$ to $a$. 
The digraph consists of one SCC.
$\EQMR$ therefore constructs the linear combination
\begin{align}
   \label{eq:Proof1}
   \rho'_\soc  & \propto  
   \ketbra{abc}{abc}  +  \ketbra{ cab }{ cab }  +  \ketbra{ bca }{ bca}  
   \nonumber \\ & \qquad
   +  \ketbra{ cba }{ cba }  +  \ketbra{ bac }{ bac }  +  \ketbra{ acb }{ acb }  \, .
\end{align}

The ``give the minority a shot'' step preserves the state:
$\rho''_\soc  =  \rho'_\soc \, .$
The voters do not unanimously prefer 
any candidate to any other:
For every voter $i$, there exists a voter $j$ such that
$\supp ( \rho_i )$ and $\supp ( \rho_j )$ occupy
subspaces labeled by distinct classical preferences.
$\EQMR$ therefore does not project $\rho''_\soc$
onto any subspace: $\rho'''_\soc  =  \rho''_\soc$.
Equation~\eqref{eq:QMR_Help1} consists of only one term,
so $\rho_\soc  =  \rho'''_\soc$.
Society's quantum preference appears in Eq.~\eqref{eq:Proof1}.

$\rho_\soc$ has support on multiple subspaces,
e.g., $\mathcal{G}^{a > b}$ and $\mathcal{G}^{b > a}$.
No quantum voter preference $\rho_i$ has.
No voter is a quantum dictator,
by definition~\ref{definition:Dictator}.
Yet $\mathcal{E}$ respects quantum transitivity, quantum unanimity, and QIIA, 
by Lemma~\ref{lemma:MajProperties}. 
$\EQMR$ satisfies the assumptions, but violates the conclusion,
of the Quantum Arrow Conjecture.
The conjecture is therefore false.
\end{proof}

%
%

One can understand as follows why
our scheme violates the Quantum Arrow Conjecture.
The successes of quantum game theory motivate
the generalization of voting to accommodate
entangled and superposed preferences.
To introduce entanglement and superpositions,
one must formulate an election
as a general quantum process---a
preparation procedure, an evolution, and a measurement.
Classical constitutional properties must be quantized faithfully.
The quantum translations enable 
the Quantum Majority-Rule constitution to respect
quantum transitivity and QIIA simultaneously.
Classical majority-rule constitutions cannot respect
transitivity and IIA simultaneously,
due to cyclic votes.
But QMR satisfies all the assumptions 
in the Quantum Arrow Conjecture.
QMR, with a cyclic voter profile,
violates the conjecture.

Disproofs simpler than ours exist.
Ours offers interpretational advantages, however. 
For instance, let $\mathcal{K}$ denote a quantum constitution  
that outputs a superposition over all inputs.
$\mathcal{K}$ violates the conjecture. 
But imposing $\mathcal{K}$ on society---choosing 
society's classical preference totally randomly---makes 
little economic sense. 
Also, our disproof elucidates how quantization
invalidates Arrow's idea.
Classical majority rule fails to satisfy
Arrow's postulates, due to cycles.
Quantum Majority Rule is more resilient.
Quantization elevates
a classically inadequate disproof attempt
to a quantum disproof.

%
%
%
%
\section{Quantum voting tactics}
\label{section:Strategic}

Imagine that Alice, Bob, and Charlie vie for 
the presidency of the American Physical Society. 
Suppose that Alice and Bob have 
greater chances of winning 
than Charlie has.
Suppose that Charlie agrees more with Alice than with Bob. 
Charlie's supporters might vote for Alice.
They would be trying to elect a president 
whom they neither prefer most nor mind most.
Charlie's supporters would be practicing strategic voting.
\emph{Strategic voting} is the submission of 
a preference other than one's opinion, 
to secure an unobjectionable outcome, 
in an election amongst three or more candidates~\cite{Barbera_01_Intro}.

We introduce \emph{quantum strategic voting}.
Voters leverage entanglement, superpositions, and interference.
We present three tactics reliant on entanglement
and one tactic reliant on interference and superpositions.
Other quantum tactics may exist and merit exploration.

To highlight the basic physics, 
we focus on strict preferences, 
as in the proof of Theorem~\ref{theorem:Arrow}.
For example, we consider $a > b$
to the exclusion of $a = b$.
We also focus on pure joint quantum states $\sigma_\soc$.

The strict-preferences assumption lets us compactify notation.
The classical preference  
$a > b > \ldots > k > \ell > m$
has the even permutations
$m > a > b > \ldots > k > \ell$,  \;
$\ell > m > a > b > \ldots > k$, etc.
These preferences are labeled $\alpha, \ldots, \mu \, .$ 
Each preference $\gamma$ corresponds to one anticycle.
We denote the anticycle with a bar: $\bar{\gamma}$.
For example, the cycle
$\alpha  :=  a > b > \ldots > m$
corresponds to the anticycle
$\bar{\alpha}  :=  m > \ldots  >  b  > a$.

Every pure quantum preference has the form
\begin{equation}   \label{eq:PureVote}
   \sum_{\gamma}  
   ( c_\gamma  \ket{ \gamma }   
   +   c_{ \bar{\gamma} }  \ket{ \bar{ \gamma } } ),
   \quad  {\rm  wherein}  \quad
   \sum_\gamma  \left( | c_\gamma |^2  
   +  | c_{ \bar{\gamma} } |^2   \right)  =  1.
\end{equation}
Society's joint quantum state has the form
\begin{align}
   \ket{ \sigma_\soc }  =
   (c_{\alpha_1}  \ldots  c_{\alpha_N} )   \ket{  \alpha  \ldots  \alpha }
   +  \ldots  +
   (c_{{ \bar{\mu} }_1}  \ldots  c_{{ \bar{\mu} }_N} )   \ket{  \bar{\mu}  \ldots  \bar{\mu} }.
\end{align}

%
%
%
%
\subsection{Three entanglement-dependent voting tactics}

Let us simplify our quantum analog of majority rule,
now that QIIA, etc. need not concern us.
We introduce the variation \emph{QMR2}, labeled $\EQMRTwo$.
QMR2 is defined as follows.

$\EQMRTwo$ processes $\sigma_\soc$
as in Eqs.~\eqref{eq:QMR_Help0a} and~\eqref{eq:QMR_Help1}.
Society's joint quantum state $\sigma_\soc$ is decohered
with respect to the product of the voters' $\BH$'s.
$\EQMRTwo$ processes each term 
in Eq.~\eqref{eq:QMR_Help1} as follows.
The $j^\th$ term has the form 
\begin{align}
   \left(   p_1^\alpha   \ldots   p_N^\mu   \right)_j 
              (  \chi_1^\alpha   \otimes   \ldots   \otimes   \chi_N^\mu  )_j
   \equiv   p_j 
   (  \chi_1^\alpha   \otimes   \ldots   \otimes   \chi_N^\mu  )_j  \, .
\end{align}
The term is labeled by
a list $L_j  =  ( \alpha, \ldots, \mu )_j$ of classical preferences.
If most of the preferences are identical---if 
most equal $\gamma$, say---the 
$j^\th$ term in Eq.~\eqref{eq:QMR_Help1}
is associated with $( p_j ,  \gamma )$.
If no majority favors any $\gamma$, 
$\EQMRTwo$ chooses uniformly randomly from amongst 
the classical preferences that appear with 
the highest frequency in $L_j$.

$\EQMRTwo$ has assembled a list $( p_j,  \gamma_j )$.
Society's classical preference
is selected from amongst the $\gamma_j$'s
according to the probability distribution $\{ p_j \}$.

Entanglement can help one voter obstruct another. 
Imagine that the Supreme Court justices vote via QMR2. 
Suppose that Justice Alice wants to diminish Justice Bob's influence. 
However Bob votes, Alice should vote oppositely. 
Alice should entangle her quantum preference with Bob's.
(Given how opinionated Supreme Court justices are, Bob might not mind broadcasting his quantum preference.)
If Bob votes as in Eq.~\eqref{eq:PureVote}, Alice should form
\begin{align}   \label{eq:SupremeCourt}
   \sum_{\gamma}  
   ( c_\gamma  \ket{ \gamma  \,  \bar{\gamma} }   
   +   c_{ \bar{\gamma} }  \ket{ \bar{ \gamma }  \,  \gamma } )  \, .
\end{align}
Insofar as $\gamma$ represents Bob's preference, 
Alice votes oppositely, with $\bar{\gamma}$. 
Even if Bob changes his mind seconds before everyone votes, 
Alice need not scramble to alter her vote. 

Entanglement also facilitates party-line voting, if society uses QMR2. 
Suppose that Alice leads the Scientists' Party, 
to which Bob and Charlie belong. 
However Alice votes, Bob and Charlie wish to vote identically. 
The voters should form the entangled state
\begin{align}  \label{eq:GHZ}
   \sum_{\gamma}  
   ( c_\gamma  \ket{ \gamma  \,  \gamma  \,  \gamma }   
   +   c_{ \bar{\gamma} }  
   \ket{ \bar{ \gamma }    \, \bar{\gamma}   \,   \bar{\gamma} } )  \, ,
\end{align}
whose weights Alice chooses. 
This state generalizes the GHZ state: 
If the weights equal each other and only two candidates run,
\eqref{eq:GHZ} reduces to
$\frac{1}{ \sqrt{2} } ( \ket{ \alpha   \alpha  \alpha }  +  
  \ket{ \bar{\alpha}    \bar{\alpha}    \bar{\alpha}  } )$.
  
Finally, entangling voters' quantum preferences can 
pare down society's possible classical preferences. 
Suppose that Alice, Bob, and Charlie separately favor $\alpha$ 
twice as much as they prefer $\beta$. 
Each voter plans to submit 
$\sqrt{ \frac{2}{3} }  \, \ket{ \alpha }  +  \sqrt{ \frac{1}{3} }  \, \ket{ \beta }$. 
Society's joint state would be
\begin{align}  \label{eq:Entangle1}
   \ket{ \sigma_\soc }
   & =  \left(  \frac{2}{3} \right)^{ 3/2 }  \ket{ \alpha \alpha \alpha }
       +  \left(  \frac{1}{3} \right)^{ 3/2 }  \ket{ \beta   \beta   \beta }  
       \\ \nonumber &  \qquad
       +  \frac{2}{3^{3/2} }  (  \ket{  \beta  \alpha  \alpha  }   
       +   \ket{  \alpha  \beta  \alpha  }
               +  \ket{  \alpha  \alpha  \beta  }  )
       \\ \nonumber &  \qquad
       +  \frac{ \sqrt{2} }{ 3^{3/2} }  ( \ket{ \alpha  \beta  \beta }  
                   +  \ket{  \beta  \alpha  \beta  }   +   \ket{  \beta  \beta  \alpha  }  ).
\end{align}
If the constitution is QMR2, society might adopt $\alpha$ or $\beta$ 
as its classical preference. 

Suppose that Alice, Bob, and Charlie misunderstand entanglement. 
Eve can take advantage of their ignorance 
to eliminate $\beta$ from society's possible classical preferences. 
Suppose that Eve convinces the three citizens to submit
the W state
\begin{align}  \label{eq:Entangle2}
   \ket{ \sigma'_\soc }  =
   \frac{1}{ \sqrt{3} }  (  \ket{ \beta  \alpha  \alpha }   
      +   \ket{ \alpha  \beta  \alpha }   +   \ket{  \alpha  \alpha  \beta  }  ).
\end{align}
This entangled analog of $\ket{ \sigma_\soc }$, Eve might claim, 
represents the voters' opinion:
$\ket{ \sigma'_\soc }$ contains
twice as many $\alpha$'s as $\beta$'s. 
But QMR2 cannot map  $\ket{ \sigma'_\soc }$  to $\beta$. 
Entangled states lead to 
different possible election outcomes 
than product states.

%
%
%
%
\subsection{Quantum strategic voting via interference}

Like entanglement, interference and relative phases 
facilitate quantum strategic voting. 
Consider a society $\mathcal{S}$ 
whose voters submit pure quantum preferences.
Let $\mathcal{S}$ use a variation \emph{QMR3},
denoted by $\EQMRTh$, on QMR. 

To illustrate QMR3 and the role of interference,
we consider the voter profile
\begin{align}
   \mathcal{P}  =  \left\{   \ket{abc},    \frac{1}{\sqrt{2}} (\ket{ bac } + \ket{ acb }),   
   \frac{1}{\sqrt{2}}  (\ket{ bac } + \ket{cba} )   \right\} \, .
\end{align}
Society's joint state has the form
\begin{align}
   \ket{ \sigma_\soc }  & =  
   \frac{1}{2} (  \ket{abc}  \ket{bac}  \ket{bac}   
   +   \ket{abc}  \ket{bac}  \ket{cba}
   \nonumber \\ & \quad
   +   \ket{abc}  \ket{acb}  \ket{bac}   
   +   \ket{abc}  \ket{acb}  \ket{cba}  )  \, .
\end{align}
Similarly to $\EQMR$, 
$\EQMRTh$ forms a digraph from 
each $\ket{ \sigma_\soc }$ term. 
Each digraph is inputted into Tarjan's algorithm, 
which returns a list of the SCCs. 
Just as $\EQMR$ maps each list to a mixed state $\rho'_\soc$, 
$\EQMRTh$ maps the $i^{\rm th}$ list 
to a superposition $\ket{ \rho_\soc^\I } \, $.
Society's quantum preference becomes
\begin{align}
   \ket{ \rho_\soc }  
   & \propto  \sum_{i = 1}^4  \ket{ \rho_\soc^\I }  \, .
\end{align}
In our example,
\begin{align}
   \ket{ \rho_\soc }  
   & =   \frac{1}{ \sqrt{6} } 
           ( \ket{bac}  +  \ket{bac}  +  \ket{abc}   +   \ket{acb} )  \\
   & =   \sqrt{ \frac{2}{3} }  \; \ket{bac}  
           +  \frac{1}{ \sqrt{6} } ( \ket{abc}   +   \ket{acb} )  \, .
\end{align}

$\ket{ \rho_\soc }$ may vanish:
The QMR3 quantum circuit 
may fail to output any quantum state.
If $\ket{ \rho_\soc }  =  0$, 
society can hold a revote. 
(Because QMR3 is defined on just pure states
and does not preserve all inputs' norms, 
QMR3 does not satisfy Definition~\ref{definition:Constitution}. 
QMR3 can be regarded as belonging to
an extension of quantum constitutions.)

Suppose that voter 3 wishes to eliminate $bac$ from 
society's possible classical preferences. 
Eliminating $\ket{ bac }$ from 
voter 3's quantum preference, $\ket{ \rho_3 }$, will not suffice. 
Voter 3 should introduce a relative phase of $-1$
into $\ket{ \rho_3 }$. 
(Alternatively, voter 3 could submit a superposition of 
$\ket{ abc }$ and $\ket{acb}$.)
Society's quantum profile becomes
\begin{align}
   \mathcal{P}'  =  \left\{   \ket{abc},    \frac{1}{\sqrt{2}} (\ket{ bac } + \ket{ acb }),   
   \frac{1}{\sqrt{2}}  (- \ket{ bac } + \ket{cba} )   \right\}.
\end{align}
Tarjan's algorithm leads to
$   \ket{ \rho_\soc }   
\propto   \frac{1}{2} ( - \ket{bac}  +  \ket{bac}  -  \ket{abc}   +   \ket{acb} ) \, .$
Hence $\ket{ \rho_\soc }  =  
\frac{1}{\sqrt{2}} ( \ket{abc}   -   \ket{acb})$.
Keeping the undesired $\ket{bac}$ in voter 3's quantum preference
contradicts our intuitions.
Yet interfering the new $\ket{ \rho_3 }$ with the other votes 
eliminates $bac$ from society's possible classical preferences.

%
%
%
%
\section{Conclusions}

We have quantized elections, in the tradition of quantum game theory. 
The quantization obviates a quantum analog of Arrow's Theorem about the impossibility of a nondictatorship's having three simple properties. 
Entanglement, superpositions, and interference expand voters' arsenals of manipulation strategies. 
Whether other quantum strategies, unavailable to classical voters, exist 
merits investigation. 
So does whether monogamy of entanglement~\cite{CoffmanKW00} limits one voter's influence on others' quantum preferences. 
If creating entanglement is difficult (as in many labs), 
the resource theory of multipartite entanglement~\cite{Horodeckis09} 
might illuminate how voters can optimize their influence. 

Additionally, other voting schemes could be quantized.
Examples include proportional representation 
(in which the percentage of voters who favor Party $a$ dictates 
the number of government seats won by Party $a$) 
and cardinal voting (in which voters grade,
rather than rank, candidates).

Finally, counterstrategies may be formulated.
Consider our first entanglement-dependent voting example:
Justice Bob of the Supreme Court prepares his vote.
Justice Alice blocks Bob's effort using entanglement.
How should Justice Bob parry? Can entanglement assist him?
This problem mirrors quantum-cryptographic problems:
A sender wishes to communicate with a receiver securely.
An eavesdropper attacks.
The eavesdropper may access quantum or only classical resources,
depending on the problem.
How can the sender and receiver parry?
We have illustrated how our ``eavesdropper,'' Justice Alice,
might wield entanglement.
How Justice Bob should counter merits thought.

These opportunities can help illuminate
how quantum theory changes the landscape
of possible outcomes and strategies in games.

%
%
%
%
\section*{Acknowledgements}
The authors thank Scott~Aaronson, Sarah~Harrison, Stephen~Jordan, Aleksander~Kubica, John~Preskill, and Thomas~Vidick for conversations. A Virginia Gilloon Fellowship, an IQIM Fellowship, a Burke Postdoctoral Fellowship, and NSF grant PHY-0803371. 
Further partial support came from 
the Walter Burke Institute for Theoretical Physics at Caltech.
The Institute for Quantum Information and Matter (IQIM) is an NSF Physics Frontiers Center supported by the Gordon and Betty Moore Foundation. NB thanks MIT, as well as QuICS at the U. of Maryland, for their hospitality during the completion of this work.
NYH thanks Fudan U. and National Taiwan U. similarly.

%
%
%

\bibliographystyle{h-physrev}
\bibliography{VotingRefs}


\end{document}